\newtheorem{definition}{Definition}
\newtheorem{corollary}{Corollary}
\newtheorem{remark}{Remark}
\DeclareMathOperator{\tr}{\rm{tr}}
\newcommand{\rd}{{\mathrm d}}
\newcommand{\calU}{{\cal{U}}}
\newcommand{\Zeta}{\mathit{Z}}
\newcommand{\va}{{\bf a}}
\newcommand{\vx}{{\bf x}}
\newcommand{\vf}{{\bf f}}
\newcommand{\vG}{{\bf G}}
\newcommand{\vu}{{\bf u}}
\newcommand{\vw}{{\bf w}}
\newcommand{\vSigma}{{\mbox{\boldmath$\Sigma$}}}
\newcommand{\T}{^\mathrm{T}}
\newtheorem{theorem}{Theorem}
\newcommand{\Rb}{\mathbb{R}}
\title{\LARGE \bf
High-Relative Degree Stochastic Control Lyapunov and Barrier Functions 
}
\author{Meenakshi Sarkar$^{1}$, Debasish Ghose$^{2}$ and Evangelos A. Theodorou$^{3}$
\thanks{$^{1}$Meenakshi Sarkar is with the Aerospace Engineering Department,
        Indian Institute of Science, Bangalore, India. This work was done during her stay as a visiting scholar at the Aerospace Engineering Department, Georgia Institute of Technology, USA 
        {\tt\small meenakshisar@iisc.ac.in}}%
\thanks{$^{2}$Debasish Ghose is with the faculty of Aerospace Engineering Department,
        Indian Institute of Science, Bangalore, India.
        {\tt\small dghose@iisc.ac.in}}%
\thanks{$^{3}$Evangelos A. Theodorou is with the faculty of Aerospace Engineering Department,
        Georgia Institute of Technology, USA 
        {\tt\small evangelos.theodorou@gatech.edu}}%
}
\begin{document}

\maketitle
\thispagestyle{empty}
\pagestyle{empty}

\begin{abstract}
We introduce High-Relative Degree Stochastic Control Lyapunov functions and Barrier Functions as a means to ensure asymptotic stability of the system and incorporate state dependent high relative degree safety constraints on a non-linear stochastic systems. Our proposed formulation also provides a generalisation to the existing literature on control Lyapunov and barrier functions for stochastic systems. The control policies are evaluated using a constrained quadratic program that is based on control Lyapunov and barrier functions. Our proposed control design is validated via simulated experiments on a relative degree 2 system (2 dimensional car navigation) and relative degree 4 system (two-link pendulum with elastic actuator).

\end{abstract}

\section{Introduction}
With the rapid advancement of autonomous systems in various sectors such as automobile, aviation, finance and medical, there has been a surge on research interest into safety verification of control systems. Safety plays a crucial role in various engineering application and thus researchers have investigated various methods such as barrier  methods \cite{nguyen2016exponential}, reachable sets \cite{althoffACM2011}, and discrete approximations \cite{mitra2013}, to ensure safety certification of the system. Recently \cite{ames2014control}, \cite{ames2019control} introduced \acp{CBF} which aims at designing controllers with verifiable safety bounds.  

While there has been a considerable amount of work on investigating barrier function methods based on the invariant set theorem \cite{ames2019control} for designing safe controllers for deterministic systems, similar literature on stochastic systems is rather scarce. Only recently, \cite{clark2019control} provided some sufficient conditions to satisfy safety constrains for stochastic systems. This work was limited in the sense that it was applicable only to scenarios where the constrained states are directly actuated. For systems where the observed states are not directly actuated via control gives rise to higher relative degree systems. For underactuated deterministic systems, \cite{nguyen2016exponential}  proposed exponential \ac{CBF} and demonstrated its effectiveness on systems with  relative degrees 6 and  4.  Recent work by \cite{xiao2019} tried to generalise the exponential barrier function approach and designed controller for an automatic cruise control problem. But currently there are no formation for a high relative degree control barrier function for stochastic systems. To the best of our knowledge, the present paper is the first attempt to generalise higher relative degree \ac{SCBF}. Interestingly, in our formulation, the recent work on \ac{SCLF} by \cite{clark2019control} turns out to be a special case where the relative degree of the system is one. 

We further investigate the links between the formulation of \ac{SCBF} and \ac{SCLF}. There has been a considerable amount work \cite{deng1997}, \cite{nishimura2013stochastic}, \cite{wang2009performance}, \cite{stochastic_stability_book} done on investigating the stability of stochastic systems in the Lyapunov framework. Recently, \cite{fan2019bayesian} investigated the CLF-CBF formulation along with a Bayesian deep learning framework to design adaptive controllers for systems operating in uncertain environments. While the existing literature provide concrete foundation for Lyapunov function based stochastic controllers, they still do not provide a generalised framework that one can apply to systems with high relative degree or underactuated systems.  While in the past the iLQG and dynamic programming framework \cite{todorov2005generalized} has been proven to be  effective in case of stochastic systems, iLQG inherently assumes the system to be stabilizable which might not be the case for a underactuated system. Incidentally, there have been very few attempts in the past to address stability of stochastic systems with high relative degree. 

In this paper we also provide a generalised Lyapunov function formulation for stochastic systems with high relative degree or high degree of underactuation. We also show how one can derive the existing formulation on \ac{SCLF} \cite{deng1997}, \cite{stochastic_stability_book} as a special case where the relative degree is equal to one.
 For our analysis we assume complete system state information. The contributions of the paper are as follows:
\begin{enumerate}
    \item We propose a generalised high relative degree \ac{SCBF} formulation  and derive sufficient conditions for the system to satisfy safety constraints with probability 1.
    \item We extend the high relative degree \ac{SCBF} formulation to a generalised high relative degree \ac{SCLF} formulation  and derive  sufficient conditions for asymptotic stability in the sense of probability.
    \item We show that the previous formulations on stochastic CLF-CBF was a special case of our generalised formulation where the relative degree is 1.
    \item We propose an approach to evaluate control policies  using a stochastic CLF-CBF based quadratic programs. The stability and performance of the controller in probabilistic sense can be guaranteed via stochastic CLFs. 
    \item We validate the proposed controller via simulated experiments on two different stochastic systems of high relative degree.
\end{enumerate}
The rest of the paper is organised as follows: Section \ref{sec:background} provides the mathematical preliminaries and  the problem formulation. Section \ref{sec:SCBF} derives the generalised higher order \ac{SCBF} and Section \ref{sec:SCLF} considers the higher order \ac{SCLF}. We provide the controller formulation in Section \ref{sec:control_policy} followed by  simulations results on two high relative degree stochastic systems with relative degree 2 (2D car navigation) and 4 (2-link pendulum with elastic actuator) in Section \ref{sec:results}. Finally we conclude the paper in Section \ref{sec:conclusion}.

\section{problem formulation and mathematical preliminaries}\label{sec:background}
We consider the stochastic dynamics system as:
\begin{equation}
    \rd\vx(t) = \big(\vf(\vx(t)) + \vG(\vx(t)) \vu(t) \big)\rd t +  \vSigma(\vx(t)) \rd \vw(t)
\label{eq:SDE}
\end{equation}
where, $\vw(t)$ is an $n_w$ dimensional standard Brownian motion; 
$\vx\in\Rb^{n_{x}}$, and $\vu\in \big(\Rb^{n_u}\cap \calU([0,T]\big)$ denote the state and control vectors, respectively, where $\calU([0,T])$ represents the set of admissible controls for a fixed finite time horizon $T\in[0,\infty)$ (i.e $0\leq t< T< \infty$). The functions $\vf: \Rb^{n_x} \rightarrow \Rb^{n_x}$, $\vG:\Rb^{n_x} \rightarrow \Rb^{n_x \times n_u}$ and $\vSigma: \Rb^{n_x} \rightarrow \Rb^{n_x \times n_w}$ represent the uncontrolled system dynamics, the actuator dynamics, and the diffusion matrices,  respectively.

The drift terms are assumed to be modelled as:
\begin{equation}
    \va(\vx,t)\rd t= \big(\vf(\vx(t)) + \vG(\vx(t)) \vu(t) \big)\rd t
\end{equation}
For the dynamical system defined by \eqref{eq:SDE} safety is ensured if  $\vx(t)\in \mathcal C_{safe}$ for all $t$, where the set $\mathcal C_{safe}$ denotes the safe region of operation and  is defined through a locally Lipschitz function $h: \Rb ^{n_x}\rightarrow \Rb$ \cite{ames2014control} as: 
\begin{align*}
    \mathcal{C}_{safe} &= \{\vx : h(\vx)\ge 0 \} \\
    \delta \mathcal{C}_{safe} &= \{\vx : h(\vx) = 0 \}.
\end{align*}
\begin{definition}\label{def:scbf_deg1}
We define an \ac{SCBF} $B(\vx): \mathbb{R}^{n_x}\rightarrow \mathbb{R}$ which is locally Lipschitz and twice differentiable on int$(\mathcal{C}_{safe})$, and satisfies the  property that
    there exist class-K functions $\alpha_1$,  $\alpha_2$, and $\alpha_3$, such that
    \begin{equation}\label{eq:cbf_cond1}
        \frac{1}{\alpha_1(h(\vx))}\leq B(\vx) \leq \frac{1}{\alpha_2(h(\vx))}, \, \forall \vx \in \mathcal C_{safe}
    \end{equation}
    \begin{equation}\label{eq:cbf_cond2}
        \frac{\partial B}{\partial \vx}\T (\vf + \vG\vu) + \frac{1}{2}\tr \Big(\frac{\partial^2 B}{\partial \vx^2}\vSigma\vSigma\T\Big) \leq \alpha_3(h(\vx)), \, \forall t
    \end{equation}
\end{definition}
\begin{theorem}\label{th:scbf_deg1}
If there exists a \ac{SCBF} of the form given in Definition \ref{def:scbf_deg1} for a process given in Eqn. (\ref{eq:SDE}), then $\text{Pr}(\vx_t\in \mathcal {C}_{safe})=1$ for $\forall t$ given that $x(t_0)\in \mathcal C_{safe}$. The detailed proof can be found in \cite{clark2019control}.
\end{theorem}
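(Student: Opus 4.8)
The plan is to show that the first time the trajectory reaches the boundary $\delta\mathcal{C}_{safe}$ is almost surely infinite, exploiting the fact that condition \eqref{eq:cbf_cond1} forces $B(\vx)$ to diverge as $\vx\to\delta\mathcal{C}_{safe}$: there $h(\vx)\to 0$, so $\alpha_1(h(\vx))\to 0$ and hence $B(\vx)\ge 1/\alpha_1(h(\vx))\to\infty$. First I would apply It\^o's formula to the process $B(\vx(t))$ driven by \eqref{eq:SDE}, producing the drift term (the infinitesimal generator) $\mathcal{L}B=\frac{\partial B}{\partial\vx}\T(\vf+\vG\vu)+\frac12\tr(\frac{\partial^2 B}{\partial\vx^2}\vSigma\vSigma\T)$ together with the martingale term $\frac{\partial B}{\partial\vx}\T\vSigma\,\rd\vw$. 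Condition \eqref{eq:cbf_cond2} bounds this drift by $\alpha_3(h(\vx))$, so the compensated process $M(t)=B(\vx(t))-\int_{t_0}^{t}\alpha_3(h(\vx(s)))\,\rd s$ has non-positive drift and is therefore a local supermartingale.

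Next I would introduce the localizing sequence of stopping times $\tau_n=\inf\{t\ge t_0:B(\vx(t))\ge n\}$ and set $\tau=\lim_{n\to\infty}\tau_n$, the exit time; because $B$ blows up at the boundary, the event $\{\vx(t)\in\mathcal{C}_{safe},\,\forall t\le T\}$ coincides with $\{\tau>T\}$. Applying the optional-stopping (Dynkin) identity to $M$ on $[t_0,t\wedge\tau_n]$ gives $\Eb[B(\vx(t\wedge\tau_n))]\le B(\vx(t_0))+\Eb\!\big[\int_{t_0}^{t\wedge\tau_n}\alpha_3(h(\vx(s)))\,\rd s\big]$. Along the stopped trajectory $\vx(s)\in\mathcal{C}_{safe}$, so the integrand is controlled (e.g. by continuity of $\alpha_3\circ h$ on the relevant sublevel set, or a uniform bound $K$ on $\alpha_3(h)$ over $\mathcal{C}_{safe}$), yielding $\Eb[B(\vx(t\wedge\tau_n))]\le B(\vx(t_0))+Kt$ for every finite $t$.

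Finally I would close with Markov's inequality: since $B(\vx(\tau_n))\ge n$ on $\{\tau_n\le t\}$, we obtain $n\,\mathrm{Pr}(\tau_n\le t)\le\Eb[B(\vx(t\wedge\tau_n))]\le B(\vx(t_0))+Kt$, so $\mathrm{Pr}(\tau_n\le t)\le (B(\vx(t_0))+Kt)/n\to 0$ as $n\to\infty$. Hence $\mathrm{Pr}(\tau\le t)=0$ for every $t$, i.e. the trajectory never reaches $\delta\mathcal{C}_{safe}$ and $\mathrm{Pr}(\vx_t\in\mathcal{C}_{safe})=1$ for all $t$, given $\vx(t_0)\in\mathcal{C}_{safe}$.

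I expect the main obstacle to be controlling the accumulated drift $\int_{t_0}^{t}\alpha_3(h(\vx(s)))\,\rd s$ uniformly: since $\alpha_3$ is only assumed class-$\mathcal{K}$, one must verify that $\alpha_3(h(\vx))$ stays bounded (or grows at most sublinearly in $B$) along trajectories that may wander far inside $\mathcal{C}_{safe}$, so that $\Eb[B(\vx(t\wedge\tau_n))]$ remains finite; otherwise a Gronwall-type estimate must replace the crude bound $Kt$. A secondary technical point is justifying that $M$ is a genuine supermartingale up to $\tau_n$, which requires integrability of the stochastic integral $\int\frac{\partial B}{\partial\vx}\T\vSigma\,\rd\vw$ on the stopped interval; this is guaranteed by the twice-differentiability of $B$ and the boundedness of $\vx$ on $\{B\le n\}$.
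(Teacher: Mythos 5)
Your overall architecture --- blow-up of $B$ at $\delta\mathcal{C}_{safe}$ via the left inequality of \eqref{eq:cbf_cond1}, It\^o's formula plus \eqref{eq:cbf_cond2} to make $M(t)=B(\vx(t))-\int_{t_0}^t\alpha_3(h(\vx(s)))\,\rd s$ a local supermartingale, localization $\tau_n$, optional stopping, then Markov's inequality --- is the right one, and it is the same supermartingale-type argument behind the result the paper invokes (the paper gives no proof of Theorem~\ref{th:scbf_deg1}; it defers entirely to \cite{clark2019control}). However, the step you yourself flag as the ``main obstacle'' is a genuine gap, not a technicality, and neither of your proposed patches closes it. There is in general no uniform bound $K$ on $\alpha_3(h(\vx))$ over $\mathcal{C}_{safe}$: a class-$\mathcal{K}$ function $\alpha_3$ need not be bounded (the standard choice $\alpha_3(s)=cs$ is not), and $h$ is unbounded on $\mathcal{C}_{safe}$ whenever the safe set is unbounded --- including this very paper's car example, where $h=(x-x_c)^2+(y-y_c)^2-r^2$. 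Worse, your localization controls the wrong side: on $\{B\le n\}$ the left inequality of \eqref{eq:cbf_cond1} yields only $h\ge\alpha_1^{-1}(1/n)$, a \emph{lower} bound on $h$, so the stopped path is kept away from the boundary but is free to wander arbitrarily deep into the safe set where $\alpha_3(h)$ is arbitrarily large; neither ``continuity on the relevant sublevel set'' nor a Gronwall estimate rescues the inequality $\Eb[B(\vx(t\wedge\tau_n))]\le B(\vx(t_0))+Kt$, since the best drift bound expressible through $B$, namely $\alpha_3(\alpha_2^{-1}(1/B))$, blows up as $B\to0^{+}$ and is not affinely dominated in $B$.

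The repair is a second localization in $h$ itself. Set $\sigma_m=\inf\{t\ge t_0: h(\vx(t))\ge m\}$. On $[t_0,\,t\wedge\tau_n\wedge\sigma_m]$ one has $h<m$, hence $\alpha_3(h)\le\alpha_3(m)=:K_m$, and your Dynkin--Markov chain gives $\text{Pr}(\tau_n\le t,\ \tau_n\le\sigma_m)\le\bigl(B(\vx(t_0))+K_m(t-t_0)\bigr)/n\to0$ as $n\to\infty$, for each fixed $m$. The localization in $m$ is then removed by path continuity: on any path that reaches $\delta\mathcal{C}_{safe}$ by time $t$, say at time $\tau_{\mathrm{exit}}\le t$, the map $s\mapsto h(\vx(s))$ is continuous on the compact interval $[t_0,\tau_{\mathrm{exit}}]$ and hence bounded by some $H<\infty$, so $\sigma_m>\tau_{\mathrm{exit}}\ge\tau_n$ for every $m>H$ and every $n$; therefore $\{\tau_{\mathrm{exit}}\le t\}\subseteq\bigcup_{m\in\mathbb{N}}\bigcap_{n}\{\tau_n\le t,\ \tau_n<\sigma_m\}$, a countable union of null events, giving $\text{Pr}(\tau_{\mathrm{exit}}\le t)=0$ for every $t$, which is the claim. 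Two smaller touch-ups: the right-hand inequality of \eqref{eq:cbf_cond1}, which your proof never uses, is what guarantees $B(\vx(t_0))<\infty$ for $\vx(t_0)$ in the interior; and since $\{B\le n\}$ need not be a bounded set, the integrability needed for optional stopping should be justified by this same double localization together with Fatou (the compensated process is bounded below by $-K_m(t-t_0)$ on the doubly stopped interval), not by ``boundedness of $\vx$ on $\{B\le n\}$.''
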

\section{High Relative Degree Stochastic Control Barrier Function }\label{sec:SCBF}
\begin{definition}
Consider a Stochastic Dynamical System defined by Eq. (\ref{eq:SDE}) with safety set defined by $\mathcal {C}_{safe}$. We define a locally Lipschitz and twice-differentiable control barrier function $B_0(\vx):$ $\mathbb{R}^{n_x}\rightarrow \mathbb{R}$  that satisfies Eqn. (\ref{eq:cbf_cond1}) and Eqn. (\ref{eq:cbf_cond2}) such that $\mathit{L}_g \mathit{L}_f^{r_b-1}B_0(\vx)u \neq 0$, and
$\mathit{L}_g \mathit{L}_f^{r_b-2}B_0(\vx)u=\mathit{L}_g \mathit{L}_f^{r_b-3}B_0(\vx)u = \cdots =\mathit{L}_g \mathit{L}_fB_0(\vx)u= 0$.
The function $B_0(\vx)$ is said to be Stochastic Control Barrier Function with relative degree $r_b$. The function $B(\vx)$ in definition \ref{def:scbf_deg1} is a special case where the relative degree of the control barrier function $r_b=1$.
\end{definition}
\begin{definition} 
We define a sequence of safe sets as:
\begin{align*}
    \mathcal{C}_0= \mathcal{C}_{safe} &= \{\vx : \psi_0(\vx)\ge 0 \} \\
    \delta \mathcal{C}_0= \delta \mathcal{C}_{safe} &= \{\vx : \psi_0(\vx) = 0 \}\\.
    \mathcal{C}_1 &= \{\vx : \psi_1(\vx)\ge 0 \} \\
    \delta \mathcal{C}_1 &= \{\vx : \psi_1(\vx) = 0 \}\\
    \vdots\\
    \mathcal{C}_{r_b} &= \{\vx : \psi_{r_b}(\vx)\ge 0 \} \\
    \delta \mathcal{C}_{r_b}= &= \{\vx : \psi_{r_b}(\vx) = 0 \}
\end{align*}
where $\psi_i$ with $i= 0,1,\cdots, r_b$ is defined as follows:
\begin{equation*}
\begin{split}
\psi_0(\vx) = &h(\vx)\\
    B_0(\vx) = &\frac{\gamma_0}{\psi_0(\vx)}\\
    \psi_1(\vx) =& \alpha_3^0(\psi_0(\vx))- \left( \frac{\partial B_0^T}{\partial \vx}\va(\vx)+\frac{1}{2}\tr \Big(\frac{\partial^2 B_0}{\partial \vx^2}\vSigma\vSigma\T\Big)\right)\\
    B_1(\vx) =& \frac{\gamma_1}{\psi_1(\vx)}\\
    \psi_2(\vx) =& \alpha_3^1(\psi_1(\vx))- \left( \frac{\partial B_1^T}{\partial \vx}\va(\vx)+\frac{1}{2}\tr \Big(\frac{\partial^2 B_1}{\partial \vx^2}\vSigma\vSigma\T\Big)\right)\\
    \vdots\\
    \psi_{r_b-1}(\vx)& = \alpha_3^{r_b-2}(\psi_{r_b-2}(\vx))-\Upsilon_{r_b-2}\\
    B_{r_b-1}(\vx) &= \frac{\gamma_{r_b-1}}{\psi_{r_b-1}(\vx)}\\
    \psi_{r_b}(\vx) = &\alpha_3^{r_b-1}(\psi_{r_b-1}(\vx))-\Upsilon_{r_b-1}\\
    \end{split}
\end{equation*}
where, $\Upsilon_j=\left( \frac{\partial B_{j}^T}{\partial \vx}\va(\vx)+\frac{1}{2}\tr \Big(\frac{\partial^2 B_{j}}{\partial \vx^2}\vSigma\vSigma\T\Big)\right)$; $\gamma_i\in \mathbb{R}^+$ and $\alpha_3^i$,  with $i\in \{0,1,\cdots, r_b-1\}$, are class K function; $B_i(\vx)$ are locally Lipschitz and twice differentiable on int$(\mathcal C_i)$ and designed such that:
\begin{align}
    \frac{1}{\alpha_1^i(\psi_i(\vx))}\leq B_i(\vx) \leq \frac{1}{\alpha_2^i(\psi_i(\vx))}, \, \forall \vx \in \mathcal {C}_i
\end{align}
Here, $\alpha_1^i$ and $\alpha_2^i$ for $i= 0,1,\cdots r_b-1$, are all class K functions. 
\end{definition}
\begin{theorem}\label{th:scbf}
If $x(t_0)\in \mathcal{C}_{r_b}$ and $\psi_{r_b}(\vx)\geq0$ then $\text{Pr}(\vx_t\in \mathcal {C}_{safe})=1$ for $\forall t$ where $r_b$ denotes the relative degree of the control barrier function.
\end{theorem}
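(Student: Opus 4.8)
The plan is to reduce the high-relative-degree statement to $r_b$ successive invocations of Theorem~\ref{th:scbf_deg1}, organised as a downward induction on the level index $i$ running from $r_b-1$ to $0$. The single fact that powers the whole argument is that, by construction of the sequence $\{\psi_i,B_i\}$, the inequality $\psi_{i+1}(\vx)\ge 0$ is \emph{literally} the relative-degree-one barrier condition \eqref{eq:cbf_cond2} for $B_i$, read with $\psi_i$ in the role of $h$ and $\alpha_3^i$ in the role of $\alpha_3$: indeed $\psi_{i+1}=\alpha_3^i(\psi_i)-\Upsilon_i$, and $\Upsilon_i=\frac{\partial B_i}{\partial\vx}\T\va+\frac12\tr\big(\frac{\partial^2 B_i}{\partial\vx^2}\vSigma\vSigma\T\big)$ is exactly the infinitesimal generator that appears on the left of \eqref{eq:cbf_cond2}.

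First I would check that every $B_i$ is a genuine \ac{SCBF} in the sense of Definition~\ref{def:scbf_deg1}, but for the set $\mathcal{C}_i=\{\vx:\psi_i(\vx)\ge 0\}$ rather than $\mathcal{C}_{safe}$. Condition \eqref{eq:cbf_cond1} is handed to us by the design inequality $1/\alpha_1^i(\psi_i)\le B_i\le 1/\alpha_2^i(\psi_i)$ together with the assumed local Lipschitzness and twice-differentiability of $B_i$ on $\mathrm{int}(\mathcal{C}_i)$, while condition \eqref{eq:cbf_cond2} is, as just observed, the statement $\psi_{i+1}\ge 0$. Hence whenever $\psi_{i+1}(\vx_t)\ge 0$ holds along the process, $B_i$ is a valid relative-degree-one \ac{SCBF} certifying forward invariance of $\mathcal{C}_i$.

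Next I would run the cascade. Reading the hypothesis $x(t_0)\in\mathcal{C}_{r_b}$ as membership in every nested set $\mathcal{C}_0,\dots,\mathcal{C}_{r_b}$ (so that $x(t_0)\in\mathcal{C}_i$ at each level), and taking $\psi_{r_b}(\vx_t)\ge 0$ to be enforced for all $t$ by the controller, the top level supplies \eqref{eq:cbf_cond2} for $B_{r_b-1}$; Theorem~\ref{th:scbf_deg1} applied to $B_{r_b-1}$ then yields $\text{Pr}(\vx_t\in\mathcal{C}_{r_b-1})=1$, i.e.\ $\psi_{r_b-1}(\vx_t)\ge 0$ for all $t$ almost surely. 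This almost-sure nonnegativity of $\psi_{r_b-1}$ is precisely \eqref{eq:cbf_cond2} for $B_{r_b-2}$, so a second application gives $\psi_{r_b-2}(\vx_t)\ge 0$ almost surely, and so on: at a generic step the conclusion of the previous step plays the role of the generator condition for $B_i$ and, with $x(t_0)\in\mathcal{C}_i$, Theorem~\ref{th:scbf_deg1} delivers $\psi_i(\vx_t)\ge 0$ for all $t$ almost surely. Terminating the induction at $i=0$ gives $h(\vx_t)=\psi_0(\vx_t)\ge 0$ for all $t$ with probability one, which is exactly $\text{Pr}(\vx_t\in\mathcal{C}_{safe})=1$.

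The step I expect to be the main obstacle is the correct propagation of the probability-one guarantees, since Theorem~\ref{th:scbf_deg1} asks for its generator condition to hold for \emph{all} $t$, whereas each intermediate $\psi_{i+1}(\vx_t)\ge 0$ is only known to hold almost surely from the preceding step. I would resolve this by applying Theorem~\ref{th:scbf_deg1} on the full-measure event on which the previous conclusion holds for all $t$; because the chain has only the finitely many levels $i=0,\dots,r_b-1$, the union of the associated null sets remains null and the final claim holds with probability one. Two supporting checks deserve attention: first, a regularity/smoothness condition, since $\psi_{i+1}$ is assembled from second derivatives of $B_i$ and hence from progressively higher derivatives of $h$, so $h$ must be smooth enough (about $C^{2r_b}$) for every $B_i$ to be twice differentiable as posited; and second, that the relative-degree-$r_b$ hypothesis on $B_0$ indeed keeps the control $\vu$ absent from $\Upsilon_0,\dots,\Upsilon_{r_b-2}$, so that $\psi_{r_b}\ge 0$ is the only constraint the controller must actuate while the lower $\psi_i$ propagate as pure state conditions.
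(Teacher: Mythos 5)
Your proposal follows essentially the same route as the paper's own proof: both reduce the high-relative-degree claim to repeated applications of the relative-degree-one result (Theorem~\ref{th:scbf_deg1}, via the argument of \cite{clark2019control}), cascading downward from $\psi_{r_b}\ge 0$ through the nested sets $\mathcal{C}_{r_b-1},\dots,\mathcal{C}_1,\mathcal{C}_0=\mathcal{C}_{safe}$, with each level's almost-sure invariance supplying the generator condition for the level below. Your extra care about taking the union of the finitely many null sets and about the smoothness of $h$ needed for the $B_i$ to be twice differentiable goes beyond what the paper writes down, but the underlying argument is identical.
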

\begin{proof}
The proof is done via induction. We onsider the case with $r_b=1$ we have
\begin{align}
    \psi_{1}(\vx)& \geq0
\end{align}
\begin{align}    
    \implies  \big( \frac{\partial B_0^T}{\partial \vx}\va(\vx)+\frac{1}{2}\tr \Big(\frac{\partial^2 B_{0}}{\partial \vx^2}\vSigma\vSigma\T\Big)\big)& \leq \alpha_3^0(\psi_0(\vx))\\
    \implies  \big( \frac{\partial B_0^T}{\partial \vx}\va(\vx)+\frac{1}{2}\tr \Big(\frac{\partial^2 B_{0}}{\partial \vx^2}\vSigma\vSigma\T\Big)\big)& \leq \alpha_3^0(h_0(\vx))
\end{align}
From Eq (8) along with the Eq.(5) we can say 
\begin{align*}
    \text{Pr}(\vx_t\in \mathcal {C}_{0})=1\\
    \implies \text{Pr}(\vx_t\in \mathcal {C}_{safe})=1
\end{align*}
This directly follows from the proof given in \cite{clark2019control}.

Now assume that the relative degree of the CBF is $j$. Now, from Theorem 1. we have 
\begin{align}
    \psi_{j}(\vx)& \geq0
\end{align}
\begin{align}
    \implies  \big( \frac{\partial B_{j-1}^T}{\partial \vx}\va(\vx)+\frac{1}{2}\tr \Big(\frac{\partial^2 B_{j-1}}{\partial \vx^2}\vSigma\vSigma\T\Big)\big)& \leq \alpha_3^0(\psi_{j-1}(\vx))
\end{align}

Following the proof given for stochastic CBF in \cite{clark2019control} we can derive that given Eq. 10 and Eq 5, 
\begin{align*}
    \text{Pr}(\vx_t\in \mathcal {C}_{j-1})=1 \text{ for } \forall t
\end{align*}
Now, given $\vx\in \mathcal {C}_{j-1} $ for $\forall t$ we can say that 
\begin{align}
    \psi_{j-1}(\vx)& \geq0
\end{align}
\begin{align}
    \implies  \big( \frac{\partial B_{j-2}^T}{\partial \vx}\va(\vx)+\frac{1}{2}\tr \Big(\frac{\partial^2 B_{j-2}}{\partial \vx^2}\vSigma\vSigma\T\Big)\big)& \leq \alpha_3^0(\psi_{j-2}(\vx))
\end{align}
With Eq. 12 and Eq. 5 we can again prove that:
\begin{align*}
    \text{Pr}(\vx_t\in \mathcal {C}_{j-2})=1 \text{ for } \forall t
\end{align*}
Thus, with the property of induction and  Eq 8 and Eq 5 we can iteratively prove that
\begin{align*}
    \text{Pr}(\vx_t\in \mathcal {C}_{j-2})&=1 \text{ for } \forall t\\
    \implies \text{Pr}(\vx_t\in \mathcal {C}_{j-3})&=1 \text{ for } \forall t\\
    \vdots\\
    \implies \text{Pr}(\vx_t\in \mathcal {C}_{1})&=1 \text{ for } \forall t\\
    \implies \text{Pr}(\vx_t\in \mathcal {C}_{0})&=1 \text{ for } \forall t\\
    \implies \text{Pr}(\vx_t\in \mathcal {C}_{safe})&=1 \text{ for } \forall t 
\end{align*}
\end{proof}
\begin{corollary}
$\mathcal{C}_{r_b}\subseteq \mathcal{C}_{r_b-1} \subseteq\cdots \subseteq \mathcal{C}_{1}\subseteq \mathcal{C}_{0}$
\end{corollary}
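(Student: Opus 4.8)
The plan is to prove the chain one link at a time, i.e.\ to establish $\mathcal{C}_{i+1}\subseteq\mathcal{C}_i$ for every $i\in\{0,1,\dots,r_b-1\}$ and then compose the $r_b$ inclusions. Because each set is a zero-superlevel set, $\mathcal{C}_i=\{\vx:\psi_i(\vx)\ge 0\}$, the link $\mathcal{C}_{i+1}\subseteq\mathcal{C}_i$ is equivalent to the pointwise implication $\psi_{i+1}(\vx)\ge 0\Rightarrow\psi_i(\vx)\ge 0$. So the whole corollary collapses to verifying this single implication for a generic index $i$, after which the telescoping $\mathcal{C}_{r_b}\subseteq\mathcal{C}_{r_b-1}\subseteq\cdots\subseteq\mathcal{C}_0$ is immediate.

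First I would unfold the defining recursion $\psi_{i+1}(\vx)=\alpha_3^i\big(\psi_i(\vx)\big)-\Upsilon_i(\vx)$, where $\Upsilon_i$ is the generator term built from $B_i=\gamma_i/\psi_i$. The operative structural fact is that $\alpha_3^i$ is class-$\mathcal{K}$, hence defined only on $[0,\infty)$ with $\alpha_3^i(0)=0$, while $B_i=\gamma_i/\psi_i$ (with $\gamma_i>0$) together with the first and second derivatives entering $\Upsilon_i$ is required to be locally Lipschitz and twice differentiable only on $\mathrm{int}(\mathcal{C}_i)$, i.e.\ exactly where $\psi_i>0$. Consequently the composition $\alpha_3^i(\psi_i(\vx))$, and therefore the whole right-hand side of the recursion, is well-defined only on $\{\vx:\psi_i(\vx)\ge 0\}=\mathcal{C}_i$. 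Thus $\mathrm{dom}(\psi_{i+1})\subseteq\mathcal{C}_i$, and since $\mathcal{C}_{i+1}\subseteq\mathrm{dom}(\psi_{i+1})$ this yields $\mathcal{C}_{i+1}\subseteq\mathcal{C}_i$ directly. The base case $\psi_0=h$, $\mathcal{C}_0=\mathcal{C}_{safe}$ anchors the construction, and the inductive composition of links is then routine.

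The hard part will be making this more than a well-definedness observation, and I expect the boundary analysis to be the main obstacle. One is tempted to argue by contraposition: assume $\psi_i(\vx)<0$, extend $\alpha_3^i$ by zero on the negatives, and try to force $\Upsilon_i(\vx)>0$ so that $\psi_{i+1}(\vx)<0$. This route fails, because the most singular contribution to $\Upsilon_i$ scales like $\tfrac{\gamma_i}{\psi_i^3}\,\|\vSigma\T\,\partial\psi_i/\partial\vx\|^2$, whose sign flips as $\psi_i$ crosses zero; near $\delta\mathcal{C}_i$ from the infeasible side this term drives $\Upsilon_i\to-\infty$, so $\psi_{i+1}$ would become large and positive rather than negative. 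Hence the generator is not sign-definite and cannot by itself certify the inclusion. The robust conclusion is therefore that the containment holds \emph{by construction}: the class-$\mathcal{K}$ domain restriction on $\alpha_3^i$ and the differentiability requirement on $B_i$ confine the recursion to $\mathcal{C}_i$. I would phrase the proof around this point, and treat the boundary $\delta\mathcal{C}_i$ (where $B_i$ blows up) carefully so that points with $\psi_i(\vx)=0$ are handled consistently with the closed definition of each $\mathcal{C}_i$.
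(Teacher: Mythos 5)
Your proof is correct, and it takes a genuinely different route from the paper's. The paper disposes of the corollary in one line by appeal to Theorem \ref{th:scbf}: if $\vx\in\mathcal{C}_{j}$, the induction in that theorem (each step invoking the stochastic invariance result of \cite{clark2019control}) yields $\text{Pr}(\vx_t\in\mathcal{C}_{j-1})=1$ for all $t$, in particular at $t=t_0$ where the state is the given deterministic point, so $\vx\in\mathcal{C}_{j-1}$, and chaining over $j$ gives the nesting. Your argument is instead purely structural and uses no probability: since $\alpha_3^{i}$ is class-$\mathcal{K}$ (domain $[0,\infty)$) and $B_i=\gamma_i/\psi_i$ is defined, locally Lipschitz and twice differentiable only on $\mathrm{int}(\mathcal{C}_i)$, the recursion $\psi_{i+1}=\alpha_3^{i}(\psi_i)-\Upsilon_i$ is well posed only on $\mathcal{C}_i$, hence $\mathcal{C}_{i+1}\subseteq\mathrm{dom}(\psi_{i+1})\subseteq\mathcal{C}_i$ by construction. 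Each approach buys something. Yours is more elementary, it proves the slightly stronger inclusion $\mathcal{C}_{i+1}\subseteq\{\vx:\psi_i(\vx)>0\}$, and it avoids a latent circularity in the paper's ordering: the invariance result used inside Theorem \ref{th:scbf} itself needs initial membership in each lower-indexed set, which is exactly what the corollary supplies, so deducing the corollary \emph{from} the theorem is delicate, whereas your domain argument stands on its own and could legitimately be cited by the theorem rather than the other way around. Your side remark is also a real insight the paper does not contain: a sign-based contrapositive cannot work, because the dominant term $\gamma_i\psi_i^{-3}\|\vSigma\T\partial\psi_i/\partial\vx\|^2$ of $\Upsilon_i$ flips sign across $\delta\mathcal{C}_i$, so the inclusion is a consequence of the domain convention built into the construction, not of any sign-definiteness of the generator. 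What the paper's route buys is economy and thematic unity, since the corollary appears as a byproduct of the safety theorem, but at the cost of rigor at precisely this point; do make sure, as you note, that boundary points with $\psi_i(\vx)=0$ (where $B_i$ blows up and $\psi_{i+1}$ is undefined) are excluded from $\mathcal{C}_{i+1}$ consistently, since that is what makes the chain of inclusions close.
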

\begin{proof}
This can be easily derived from Theorem \ref{th:scbf}. 
\end{proof}
\begin{remark}
It can be noted from Theorem \ref{th:scbf} that  
\begin{align*}
    \frac{\partial B_{i}^T}{\partial \vx}\va(\vx)= \frac{\partial B_{i}^T}{\partial \vx}\vf(\vx) \text{ for } i= 0,1, \cdots, r_b-2
\end{align*}
and 
\begin{align*}
    \frac{\partial B_{r_b-1}^T}{\partial \vx}\va(\vx)= \frac{\partial B_{r_b-1}^T}{\partial \vx}(\vf(\vx)+\vG\vu)
\end{align*}

\end{remark}
\section{High Relative Degree Stochastic Control Lyapunov Function }\label{sec:SCLF}
\begin{definition}
\label{def:stable_probablity}
A solution $x(t)\equiv 0$ of equation \ref{eq:SDE} is said to be stable in \textit{probability} for $t\geq 0$ if for any $s\geq0$ and $\epsilon > 0$
\begin{equation*}
     \underset{\eta\rightarrow0}{\mathrm{lim}}\mathbf{P}\left \{ \underset{t>s}{\mathrm{sup}} \left | x^{s,\eta}(t) \right |>\epsilon\right \}=0
\end{equation*}
\end{definition}
\begin{theorem}\label{th:stable_sclf}
Let $U$ be a domain containing the line $x=0$ and there exist a smooth positive definite function $V(\vx)\in \mathbb{C}_2^0(U)$ s.t $V(\vx): \mathbb{R}^{n_x}\rightarrow \mathbb{R}_+$ that satisfies
\begin{equation*}
    \mathit{L}_{f}V+\mathit{L}_{G}Vu+\frac{1}{2}\textrm{tr}\{\frac{\partial^2 V}{\partial \vx^2}\Sigma \Sigma ^T\} \leq0
\end{equation*}
The function $V(\vx)$ is known as the Stochastic Control Lyapunov Function \ac{SCLF}. Detailed proof can be found in \cite{stochastic_stability_book}
\end{theorem}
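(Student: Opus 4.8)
The plan is to adapt the classical stochastic Lyapunov (Khasminskii-type) argument, combining It\^o's formula with the optional stopping theorem for nonnegative supermartingales. First I would apply It\^o's formula to the scalar process $V(\vx(t))$ evaluated along a solution of \eqref{eq:SDE}, which yields
\begin{equation*}
    \rd V(\vx(t)) = \Big(\mathit{L}_f V + \mathit{L}_G V u + \tfrac{1}{2}\tr\{\tfrac{\partial^2 V}{\partial \vx^2}\vSigma\vSigma\T\}\Big)\rd t + \tfrac{\partial V}{\partial \vx}\T\vSigma\,\rd\vw(t).
\end{equation*}
By the hypothesis of the theorem the bracketed drift term is nonpositive, so the finite-variation part of $V(\vx(t))$ is nonincreasing while the remaining term is a local martingale; hence $V(\vx(t))$ is a local supermartingale, and being nonnegative it is in fact a supermartingale.

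Next I would fix $s\ge 0$ and an $\epsilon>0$ small enough that the closed ball $\{|\vx|\le\epsilon\}\subset U$, start the process at $\vx(s)=\eta$, and introduce the first exit time $\tau_\epsilon = \inf\{t>s : |\vx^{s,\eta}(t)|\ge\epsilon\}$. Localizing with stopping times that reduce the stochastic-integral term to a genuine martingale and applying the optional stopping theorem on $[s,\,t\wedge\tau_\epsilon]$, I obtain $V(\eta)\ge\Eb\big[V(\vx(t\wedge\tau_\epsilon))\big]$. Using positive definiteness of $V$, set $V_\epsilon := \inf_{|\vx|=\epsilon} V(\vx)$, which is strictly positive since $V$ is continuous and positive definite on the compact sphere. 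On the event $\{\tau_\epsilon\le t\}$ we have $V(\vx(\tau_\epsilon))\ge V_\epsilon$, so a Chebyshev-type estimate gives $\mathbf{P}\{\tau_\epsilon\le t\}\le V(\eta)/V_\epsilon$. Letting $t\to\infty$ and noting $\{\tau_\epsilon<\infty\}=\{\sup_{t>s}|\vx^{s,\eta}(t)|\ge\epsilon\}$ yields
\begin{equation*}
    \mathbf{P}\Big\{\sup_{t>s}\big|\vx^{s,\eta}(t)\big|>\epsilon\Big\}\le \frac{V(\eta)}{V_\epsilon}.
\end{equation*}

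Finally, since $V$ is continuous and positive definite with $V(0)=0$, we have $V(\eta)\to 0$ as $\eta\to 0$, so the right-hand side tends to zero; for larger $\epsilon$ the claim follows a fortiori by monotonicity of the events in $\epsilon$. This is precisely stability in probability as in Definition \ref{def:stable_probablity}. I expect the main obstacle to be the rigorous supermartingale/optional-stopping step: the It\^o correction and the stochastic-integral term must be controlled uniformly up to the exit time $\tau_\epsilon$, which is the usual delicate point in stochastic Lyapunov theory and is handled by the localization argument together with the boundedness of $V$ and its first two derivatives on the compact region $\{|\vx|\le\epsilon\}\subset U$.
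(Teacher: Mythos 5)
Your proposal is correct and is exactly the classical Khasminskii supermartingale argument (It\^o's formula, nonnegative local supermartingale, exit-time localization, optional stopping, and a Chebyshev-type bound) that the paper itself defers to: the paper gives no proof of Theorem~\ref{th:stable_sclf} beyond citing \cite{stochastic_stability_book}, and your reconstruction is precisely the proof contained in that reference. The localization to the ball $\{|\vx|\le\epsilon\}\subset U$ before invoking optional stopping correctly handles the only delicate point, namely that $V$ and the drift inequality are only guaranteed on $U$.
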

\begin{definition}
\label{def:asymptotic_stable_probablity}
A solution $x(t)\equiv 0$ of equation \ref{eq:SDE} is said to be asymptotically stable if it is stable in probability and moreover 
\begin{equation*}
     \underset{\eta\rightarrow0}{\mathrm{lim}}\mathbf{P}\left \{ \underset{t\rightarrow \infty}{\mathrm{lim}}  x^{s,\eta}(t)=0\right \}=1
\end{equation*}
\end{definition}
\begin{theorem}\label{th:asymp_stable_sclf}
The system \ref{eq:SDE} is globally asymptotically stabilizable in probability if there exist a SCLF that satisfies 
\begin{equation*}
  \mathit{L}_{f}V+\mathit{L}_{G}Vu+\frac{1}{2}\textrm{tr}\{\frac{\partial^2 V}{\partial \vx^2}\Sigma \Sigma ^T\} <0
\end{equation*}
Detailed proof can be found in \cite{stochastic_stability_book} and \cite{deng1997}
\end{theorem}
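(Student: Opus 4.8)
The plan is to invoke the standard stochastic Lyapunov (Khasminskii-type) machinery: first secure stability in probability, then upgrade to asymptotic stability by a supermartingale-convergence argument that exploits the \emph{strict} inequality. Since the strict inequality trivially implies the non-strict one of Theorem \ref{th:stable_sclf}, the hypotheses already guarantee that $x(t)\equiv 0$ is stable in probability. It therefore remains to verify the second condition of Definition \ref{def:asymptotic_stable_probablity}, namely that trajectories converge to the origin with probability one.

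Next I would apply It\^o's formula to $V(\vx(t))$ along solutions of \eqref{eq:SDE}. Writing the infinitesimal generator as $\mathcal{L}V = \mathit{L}_{f}V + \mathit{L}_{G}Vu + \frac{1}{2}\tr\{\frac{\partial^2 V}{\partial \vx^2}\vSigma\vSigma^T\}$, It\^o's formula gives
\[
dV(\vx(t)) = \mathcal{L}V\,dt + \frac{\partial V}{\partial \vx}^T \vSigma\,d\vw(t).
\]
The diffusion term is a local martingale of zero mean, so taking expectations yields $\Eb[V(\vx(t))] = V(\vx(0)) + \Eb\int_0^t \mathcal{L}V\,ds$. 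Because $V\ge 0$ and $\mathcal{L}V<0$, the process $V(\vx(t))$ is a nonnegative supermartingale, and by the supermartingale convergence theorem it converges almost surely to a finite nonnegative limit $V_\infty$.

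The crux is to show $V_\infty = 0$ almost surely. Fixing $\varepsilon>0$ and considering the region $\{\,\varepsilon\le |\vx|\,\}$ (intersected with a large ball, the global part being handled through radial unboundedness of $V$), positive definiteness and continuity bound $\mathcal{L}V \le -c < 0$ there. If $V_\infty>0$ on a set of positive probability, the trajectory would accumulate infinite expected sojourn time in this region, forcing $\Eb[V(\vx(t))]\to-\infty$ and contradicting $V\ge 0$. Hence the trajectory enters every neighborhood of the origin; combined with stability in probability and the almost-sure limit of $V$, this gives $\vx(t)\to 0$ almost surely, which is precisely the requirement of Definition \ref{def:asymptotic_stable_probablity}.

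The main obstacle I anticipate is making this last step rigorous: converting the pointwise strict inequality $\mathcal{L}V<0$ into a uniform estimate $\mathcal{L}V\le -c<0$ away from the origin (which needs continuity of $\mathcal{L}V$ together with compactness of the annular regions), and ruling out escape to infinity. For the \textbf{global} claim one must additionally assume $V$ is radially unbounded, and any degeneracy of $\vSigma$ near the origin must be treated carefully so that $V_\infty=0$ genuinely forces $\vx(t)\to 0$.
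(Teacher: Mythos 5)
Your proposal is correct and follows essentially the same route as the proof the paper relies on: the paper itself gives no argument for this theorem, deferring entirely to \cite{stochastic_stability_book} and \cite{deng1997}, and your It\^o-formula/nonnegative-supermartingale argument with the sojourn-time contradiction is precisely the Khasminskii-style proof found in those references. Your closing caveat is also well taken: the \emph{global} claim additionally requires $V$ to be radially unbounded (and the localization of the local martingale must be handled via stopping times), hypotheses the paper's terse statement omits but which the cited sources assume.
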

\begin{definition}\label{def:higher_sclf}
Let there exist a smooth function $V_0(\vx)$ that satisfies the conditions given in theorem \ref{th:asymp_stable_sclf} such that, $\mathit{L}_G \mathit{L}_f^{r_l-1}V_0(\vx)u \neq 0$ and 
$\mathit{L}_G \mathit{L}_f^{r_b-2}V_0(\vx)u=\mathit{L}_G \mathit{L}_f^{r_l-3}V(\vx)u = \cdots =\mathit{L}_G \mathit{L}_fV_0(\vx)u= 0$. The function $V(\vx)$ is said to be Stochastic Control Lyapunov Function with relative degree $r_l$. \\
The function $V(\vx)$ defined in theorem \ref{th:asymp_stable_sclf} is a special case where the relative degree of the \ac{SCLF}, $r_l=1$
\end{definition}
\begin{definition}
Define a sequence of Lyapunov sets as:
\begin{align*}
    \mathcal{L}_1= \mathcal{L}_{stable} &= \{\vx : \chi_{1}(\vx)\ge 0 \} \\
    \delta \mathcal{L}_1= \delta \mathcal{L}_{stable} &= \{\vx : \chi_1(\vx) = 0 \}\\
    \mathcal{L}_2 &= \{\vx : \chi_2(\vx)\ge 0 \} \\
    \delta \mathcal{L}_2 &= \{\vx : \chi_2(\vx) = 0 \}\\
    \vdots\\
    \mathcal{L}_{r_l} &= \{\vx : \chi_{r_l}(\vx)\ge 0 \} \\
    \delta \mathcal{L}_{r_l}= &= \{\vx : \chi_{r_l}(\vx) = 0 \}
\end{align*}
where, we define  $\chi_i$ with $i= 0,1,\cdots, r_l$  as follows:
\begin{align*}
    g(\vx)= &d- \big(\mathit{L}_{f}V_0+\mathit{L}_{G}V_0u+\frac{1}{2}\textrm{tr}\{\frac{\partial^2 V_0}{\partial \vx^2}\Sigma \Sigma ^T\} \big)\\
    \chi_1(\vx) = &g(\vx)\\
    V_1(\vx) = &\frac{\upsilon_1}{\chi_1(\vx)}\\
    \chi_2(\vx) =& \beta_3^1(\chi_1(\vx))- \left( \frac{\partial V_1^T}{\partial \vx}\va(\vx)+\frac{1}{2}\tr \Big(\frac{\partial^2 V_1}{\partial \vx^2}\vSigma\vSigma\T\Big)\right)\\
    V_2(\vx) =& \frac{\upsilon_2}{\chi_2(\vx)}\\
    \chi_3(\vx) =& \beta_3^2(\chi_2(\vx))- \left( \frac{\partial V_2^T}{\partial \vx}\va(\vx)+\frac{1}{2}\tr \Big(\frac{\partial^2 V_2}{\partial \vx^2}\vSigma\vSigma\T\Big)\right)\\
    \vdots\\
    \chi_{r_l-1}(\vx)& = \beta_3^{r_l-2}(\chi_{r_l-2}(\vx))-\Zeta_{r_l-2}\\
    V_{r_l-1}(\vx) &= \frac{\upsilon_{r_b-1}}{\chi_{r_b-1}(\vx)}\\
    \chi_{r_l}(\vx) = &\beta_3^{r_l-1}(\chi_{r_b-1}(\vx))-\Zeta_{r_l-1}\\
\end{align*}
where, $\Zeta_j=\left( \frac{\partial V_{j}^T}{\partial \vx}\va(\vx)+\frac{1}{2}\tr \Big(\frac{\partial^2 V_{j}}{\partial \vx^2}\vSigma\vSigma\T\Big)\right)$ and $d\in \mathbb{R}$ is the relaxation variable. $\upsilon_i\in \mathbb{R}^+$ and $\beta_3^i$s are class K functions where $i\in \{1,\cdots, r_l-1\}$ . $V_i(\vx)$ are are locally Lipschitz, twice differentiable on int$(\mathcal L_i)$ and designed such that:
\begin{align}
    \frac{1}{\beta_1^i(\chi_i(\vx))}\leq V_i(\vx) \leq \frac{1}{\beta_2^i(\chi_i(\vx))}, \, \forall \vx \in \mathcal {L}_i
\end{align}
Here $\beta_1^i$ and $\beta_2^i$ for $i= 1,\cdots r_b-1$ are all class K functions. 
\end{definition}
\begin{theorem}\label{th:sclf}
If $x(t_0)\in \mathcal{L}_{r_l}$ and $\chi_{r_l}(\vx)\geq0$, then the system governed with equation \ref{eq:SDE} is globally asymptotically stabilizable in probability.
\end{theorem}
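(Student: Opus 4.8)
The plan is to reproduce the inductive peeling argument of Theorem \ref{th:scbf}, working the Lyapunov sets $\mathcal{L}_{r_l}\subseteq\cdots\subseteq\mathcal{L}_1$ from the outside in, and then to close with the first-order stability result of Theorem \ref{th:asymp_stable_sclf}. The key algebraic observation, exactly paralleling the barrier construction, is that each nonnegativity requirement $\chi_{i+1}(\vx)\ge 0$ is, after rearrangement, an \ac{SCBF}-type infinitesimal-generator inequality for the reciprocal function $V_i=\upsilon_i/\chi_i$, with $\chi_i$ playing the role of $h$.

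First I would unfold the hypothesis $\chi_{r_l}(\vx)\ge 0$. By the definition of $\chi_{r_l}$ this is equivalent to
\[\frac{\partial V_{r_l-1}^T}{\partial \vx}\va(\vx)+\frac{1}{2}\tr\Big(\frac{\partial^2 V_{r_l-1}}{\partial \vx^2}\vSigma\vSigma\T\Big)\le \beta_3^{r_l-1}\big(\chi_{r_l-1}(\vx)\big),\]
which is precisely condition \eqref{eq:cbf_cond2} with $V_{r_l-1}$ in the role of the barrier function and $\chi_{r_l-1}$ in the role of $h$. Since $V_{r_l-1}$ is locally Lipschitz and twice differentiable on $\mathrm{int}(\mathcal{L}_{r_l-1})$ and satisfies the reciprocal class-$\mathcal{K}$ bounds built into the definition, it qualifies as a relative-degree-one \ac{SCBF} for $\mathcal{L}_{r_l-1}$. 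Invoking Theorem \ref{th:scbf_deg1} (the base result borrowed from \cite{clark2019control}) then yields $\mathrm{Pr}(\vx_t\in\mathcal{L}_{r_l-1})=1$ for all $t$, i.e. $\chi_{r_l-1}(\vx)\ge 0$ almost surely.

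Next I would iterate this as an induction: given $\chi_{j}(\vx)\ge 0$ almost surely, the same rearrangement shows that $V_{j-1}=\upsilon_{j-1}/\chi_{j-1}$ satisfies the \ac{SCBF} generator inequality on $\mathcal{L}_{j-1}$, so Theorem \ref{th:scbf_deg1} gives $\mathrm{Pr}(\vx_t\in\mathcal{L}_{j-1})=1$ and hence $\chi_{j-1}(\vx)\ge 0$ almost surely. Running this from $j=r_l$ down to $j=2$ propagates forward invariance through the nested sets and leaves $\chi_1(\vx)=g(\vx)\ge 0$ almost surely. Since $g(\vx)=d-\big(\mathit{L}_f V_0+\mathit{L}_G V_0 u+\tfrac{1}{2}\tr\{\tfrac{\partial^2 V_0}{\partial \vx^2}\vSigma\vSigma\T\}\big)$, this last inequality is equivalent to
\[\mathit{L}_f V_0+\mathit{L}_G V_0 u+\frac{1}{2}\tr\Big(\frac{\partial^2 V_0}{\partial \vx^2}\vSigma\vSigma\T\Big)\le d.\]
Choosing the relaxation variable $d<0$ places $V_0$ exactly in the hypothesis of Theorem \ref{th:asymp_stable_sclf}, which then certifies that the system \eqref{eq:SDE} is globally asymptotically stabilizable in probability.

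I expect the main obstacle to be the handling of the relaxation variable $d$: the backward induction only delivers the weak inequality with upper bound $d$, whereas Theorem \ref{th:asymp_stable_sclf} demands \emph{strict} negativity of the infinitesimal generator of $V_0$. Making the argument rigorous requires $d$ to be chosen strictly negative (or the terminal inequality to be strict on $U\setminus\{0\}$), and one must check that this choice is simultaneously consistent with the nonnegativity of every intermediate $\chi_i$ and with the positive definiteness of $V_0$ near the origin. A secondary point to verify at each recursion level is that the reciprocal $V_i$ genuinely inherits the local Lipschitz and twice-differentiable regularity on $\mathrm{int}(\mathcal{L}_i)$ together with the class-$\mathcal{K}$ reciprocal bounds, so that Theorem \ref{th:scbf_deg1} is applicable throughout.
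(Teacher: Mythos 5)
Your proof takes essentially the same route as the paper: the paper's own proof simply cites Theorem \ref{th:scbf} to peel the nested sets $\mathcal{L}_{r_l}\subseteq\cdots\subseteq\mathcal{L}_1$ and conclude $\text{Pr}(\vx_t\in\mathcal{L}_{stable})=1$ for all $t$, then invokes Theorem \ref{th:asymp_stable_sclf}, and you merely inline that induction by applying Theorem \ref{th:scbf_deg1} level by level. Your closing caveat about the relaxation variable is well taken: the paper's two-line proof silently passes from the weak bound $\mathit{L}_{f}V_0+\mathit{L}_{G}V_0u+\frac{1}{2}\textrm{tr}\{\frac{\partial^2 V_0}{\partial \vx^2}\Sigma\Sigma^T\}\leq d$ to the strict negativity demanded by Theorem \ref{th:asymp_stable_sclf}, so your insistence that $d<0$ (or that the terminal inequality be strict) identifies and repairs a step the paper glosses over.
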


\begin{proof}\label{pf:sclf}
From theorem \ref{th:scbf} we can show that given,  $\chi_{r_l}(\vx)\geq0$,
\begin{align}
 \text{Pr}(\vx_t\in \mathcal {L}_{stable})=1\text{ for }\forall t, 
\end{align}
\\
Using theorem \ref{th:asymp_stable_sclf} we can easily show that system will be globally asymptotically stabilizable in probability. 
\end{proof}
\begin{corollary}
$\mathcal{L}_{r_l}\subseteq \mathcal{L}_{r_l-1} \subseteq\cdots \subseteq \mathcal{L}_{2}\subseteq \mathcal{L}_{1}$
\end{corollary}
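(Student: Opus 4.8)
The plan is to prove the whole chain by establishing the single link $\mathcal{L}_i \subseteq \mathcal{L}_{i-1}$ for every $i \in \{2,\ldots,r_l\}$ and then composing these inclusions; transitivity of $\subseteq$ immediately yields $\mathcal{L}_{r_l} \subseteq \mathcal{L}_{r_l-1} \subseteq \cdots \subseteq \mathcal{L}_1$. Unlike Theorem \ref{th:sclf}, this is a static, purely set-theoretic claim, so I would deliberately avoid any trajectory or invariance-in-probability argument and instead work straight from the recursive definition of the $\chi_i$. This mirrors the role that the analogous barrier-set corollary plays relative to Theorem \ref{th:scbf}.

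Fix $i$ and take $\vx \in \mathcal{L}_i$, i.e.\ $\chi_i(\vx) \geq 0$. By construction $\chi_i(\vx) = \beta_3^{i-1}(\chi_{i-1}(\vx)) - \Zeta_{i-1}$, so membership reads $\beta_3^{i-1}(\chi_{i-1}(\vx)) \geq \Zeta_{i-1}$. The structural fact I would exploit is that $\beta_3^{i-1}$ is a class-K function: it is defined on $[0,\infty)$, is continuous and strictly increasing, and satisfies $\beta_3^{i-1}(0)=0$, so $\beta_3^{i-1}(s)\geq 0$ if and only if $s \geq 0$. Hence the composite $\beta_3^{i-1}(\chi_{i-1}(\vx))$ entering the definition of $\chi_i$ is meaningful only where $\chi_{i-1}(\vx)\geq 0$, that is, only on $\mathcal{L}_{i-1}$. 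The same restriction is forced by the term $\Zeta_{i-1}$, which is the infinitesimal generator of $V_{i-1}(\vx) = \upsilon_{i-1}/\chi_{i-1}(\vx)$: both $V_{i-1}$ and the first and second derivatives appearing in $\Zeta_{i-1}$ blow up as $\chi_{i-1}\to 0^+$ and are undefined for $\chi_{i-1}\leq 0$, so the recursion producing $\chi_i$ lives on $\mathrm{int}(\mathcal{L}_{i-1})$. Consequently $\{\vx : \chi_i(\vx)\geq 0\} \subseteq \mathcal{L}_{i-1}$, which is exactly $\mathcal{L}_i \subseteq \mathcal{L}_{i-1}$, and finitely many such links give the full nesting.

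The main obstacle, and the step I would treat most carefully, is the sign of $\Zeta_{i-1}$: the bare inequality $\beta_3^{i-1}(\chi_{i-1}(\vx)) \geq \Zeta_{i-1}$ does not by itself pin down the sign of $\chi_{i-1}(\vx)$ unless one either controls $\Zeta_{i-1}$ or invokes well-posedness of the construction on $\mathcal{L}_{i-1}$. I would resolve this by making explicit the standing convention that is already implicit in the design inequality $\tfrac{1}{\beta_1^{i-1}(\chi_{i-1}(\vx))} \leq V_{i-1}(\vx) \leq \tfrac{1}{\beta_2^{i-1}(\chi_{i-1}(\vx))}$, which is only consistent when $\chi_{i-1}(\vx) > 0$: each $V_{i-1}$ is a barrier on $\mathcal{L}_{i-1}$, hence $\chi_{i-1} > 0$ throughout the interior on which $\chi_i$ is defined, with the boundary case $\chi_{i-1}=0$ recovered by continuity. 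Under this convention the inclusion is immediate and requires no dynamical content, paralleling the derivation of the barrier-set corollary from Theorem \ref{th:scbf}.
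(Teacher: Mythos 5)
Your proof is correct, but it takes a genuinely different route from the paper. The paper's proof of this corollary is a one-line appeal to Theorem \ref{th:sclf}, whose own proof inherits the inductive cascade of Theorem \ref{th:scbf}: there, $\chi_i(\vx)\geq 0$ is read as the generator bound $\Zeta_{i-1}\leq \beta_3^{i-1}\bigl(\chi_{i-1}(\vx)\bigr)$ for the reciprocal barrier $V_{i-1}$, and the invariance-in-probability result of \cite{clark2019control} is invoked level by level to conclude membership in $\mathcal{L}_{i-1}$, $\mathcal{L}_{i-2}$, and so on down the chain; the nesting is then read off from this dynamical argument. You instead prove each link $\mathcal{L}_i\subseteq\mathcal{L}_{i-1}$ statically from the domain of definition of the recursion: $\beta_3^{i-1}$ is class K and hence defined only for nonnegative arguments, while $\Zeta_{i-1}$ involves derivatives of $V_{i-1}=\upsilon_{i-1}/\chi_{i-1}$, which exist only on $\mathrm{int}(\mathcal{L}_{i-1})$, so $\chi_i(\vx)\geq 0$ forces $\chi_{i-1}(\vx)>0$ (in fact your argument yields the slightly stronger $\mathcal{L}_i\subseteq\mathrm{int}(\mathcal{L}_{i-1})$). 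This buys something real: a set inclusion is a static claim, and the dynamical route is awkward for it, since the probabilistic invariance theorems take initial membership in the relevant set as a \emph{hypothesis} rather than delivering it, so your purely set-theoretic argument is both more elementary (no stochastic content at all) and arguably the more rigorous reading of what the corollary asserts. The cost, which you identify and handle candidly, is that you must make explicit the well-posedness conventions the paper leaves implicit --- namely that the design inequality $1/\beta_1^{i-1}(\chi_{i-1}(\vx))\leq V_{i-1}(\vx)\leq 1/\beta_2^{i-1}(\chi_{i-1}(\vx))$ is consistent only where $\chi_{i-1}(\vx)>0$ --- because, as you correctly observe, the bare inequality $\beta_3^{i-1}\bigl(\chi_{i-1}(\vx)\bigr)\geq \Zeta_{i-1}$ alone does not determine the sign of $\chi_{i-1}(\vx)$ if the expressions were extended beyond their natural domain.
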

\begin{proof}
This can be easily derived from Theorem \ref{th:sclf}. 
\end{proof}
\begin{remark}
It can be noted from Theorem \ref{th:sclf} that  
\begin{align*}
    \frac{\partial V_{i}^T}{\partial \vx}\va(\vx)= \frac{\partial V_{i}^T}{\partial \vx}\vf(\vx) \text{ for } i= 0,1, \cdots, r_b-2
\end{align*}
and 
\begin{align*}
    \frac{\partial V_{r_l-1}^T}{\partial \vx}\va(\vx)= \frac{\partial V_{r_l-1}^T}{\partial \vx}(\vf(\vx)+\vG\vu)
\end{align*}
\end{remark}
\section{Control Policies from \ac{SCLF}-\ac{SCBF} based Quadratic Programs}\label{sec:control_policy}
In this section we discuss how the control policies are formulated in the form of a \ac{QP} based on \ac{HDSCLF}  and \ac{CBF}. The details of \ac{HDSCLF}-\ac{CBF} based \ac{QP} are as follows:  \\
\noindent\rule{\linewidth}{0.4pt}\vspace{1mm}
\par \textbf{HDSCLF-CBF-QP}:
\begin{align}
    \vu^* =& \underset{\vu, d}{\mathrm{ argmin}} \text{    }& &\vu^T Q\vu+ pd^2 \label{eq:qp}\\
    &\text{  s.t. } & &\chi_{r_l}(\vx)\geq0 \tag{\textbf{SCLF}}\label{eq:SCLF}\\
    &\text{ } & &\psi_{r_b}(\vx)\geq0 \tag{\textbf{SCBF}}\label{eq:SCBF}\\
    &\text{ } & &\vu_{upper}\geq \vu \geq \vu_{lower} \tag{\textbf{Constraint}}\label{eq:constraint}
\end{align}
\noindent\rule{\linewidth}{0.4pt} \vspace{2mm}
Equation \ref{eq:qp} denotes the cost of \ac{QP} subjected to the \ac{SCLF} and \ac{SCBF} conditions and the higher and lower limits of the control effort.
\section{Simulation Results}\label{sec:results}
In order to validate our proposed method we consider two tasks: 2D car navigation in the presence of multiple obstacle (non-linear system with relative degree 2) and swinging a 2-link pendulum with elastic actuator (relative degree 4 system with 2 degree underactuation). We assume the noise only enters through the control channels similar to \cite{pereira2019learning} and \cite{exarchos2018stochastic}. The details of the simulated experiments and results are as follows:
\subsection{Two-dimensional Car navigation (Relative Degree 2)}
We validate our proposed method on a relatively complex and nonlinear scenarios of a two-dimensional car navigation. We consider a simple 2D circle following a simplified car dynamics as given in \cite{xie2017} for a time horizon $T= 8$s :
\begin{align*}
   \dot{x}=v sin(\theta), & &
   \dot{y}=v cos(\theta)\\
   \dot{\theta}= vu^{\theta},& &
   \dot{v}= u^{v}
\end{align*}
where $\vx=[x, y, \theta, v]$ denotes the state of the car which includes 2D position, orientation and forward velocity. Control variables $u^\theta$ and $u^v$ changes the steering angle and forward velocity respectively. Here we want to find control policies to navigate the car from the initial x-y position of  $[0,0]$ with orientation of $[0,0]$ to the final goal position $[x_{desired},y_{desired}]= [4,4]$ in the presence of single and multiple obstacles. 
The 2D constraints on the positional states from the obstacle avoidance problem leads to a relative degree 2 control barrier function as the positional states are not directly actuated. The controller also tries to reach the target position while avoiding the obstacles which leads to a relative degree 2 control Lyapunov function formulation. The control Lyapunov function was defined as :
\begin{equation*}
    V_0(\vx)= \frac{1}{2}\big((x-x_{desired})^2+(y-y_{desired})^2\big)
\end{equation*}
The barrier function and the safe sets for each obstacle $i$ are defined as follows:
\begin{align*}
   \psi_0^i(\vx) &= h^i(\vx)= \big((x^i-x^i_c)^2+ (y^i-x^i_c)^2- (r^i)^2 \big)\\
   \mathcal{C}_{safe}^i &= \{\vx : \psi_0^i(\vx)\ge 0 \}\\
    B_0^i(\vx)&= \frac{\gamma_0^i}{\psi_0^i(\vx)}
\end{align*}
and 
\begin{equation*}
\mathcal{C}_{safe}= \bigcup_{i=1}^{n_o}\mathcal{C}_{safe}^i
\end{equation*}
 Where $(x^i_c, y^i_c)$ and $r^i$ denote the 2D position of the centre and radius of the i$^{th}$ obstacle respectively. $n_o$ represents the no of obstacle present. \\
 
 \begin{figure}[H]
  \includegraphics[width=\linewidth, ]{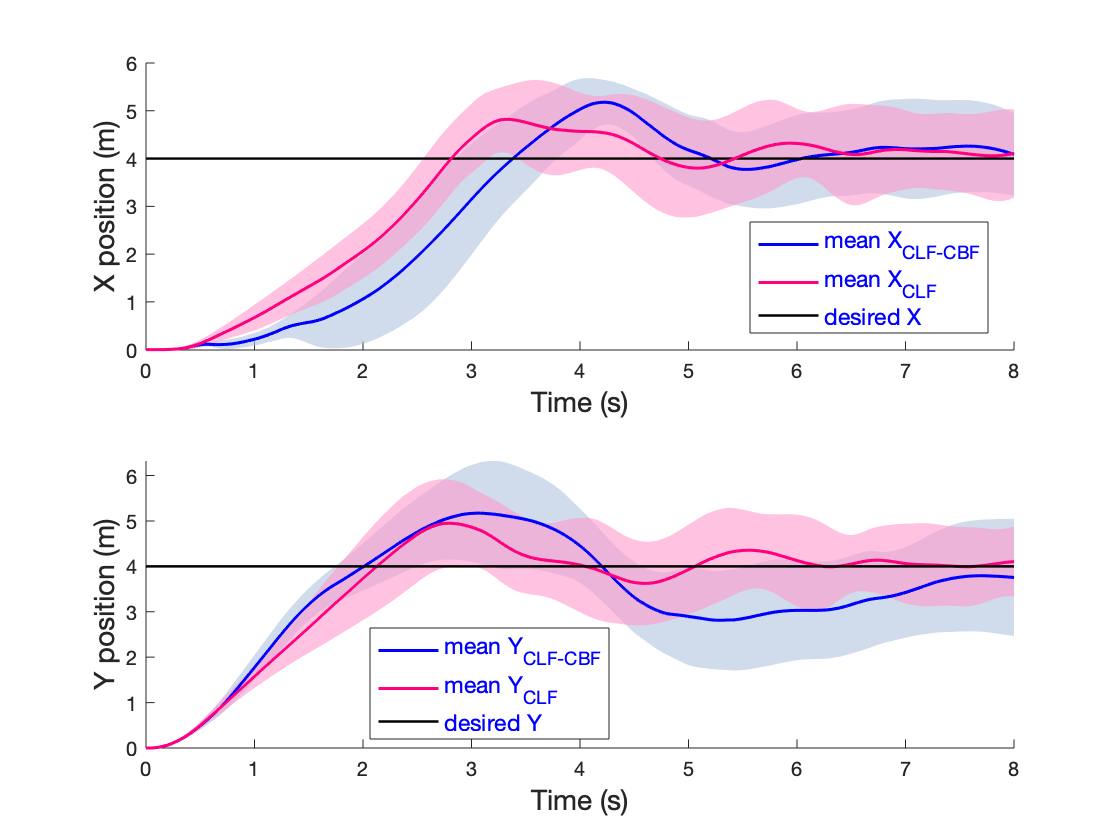}
  \caption{Multiple obstacle avoidance Task: The mean and the sigma distribution of the $x$ and $y$ state trajectories for the CLF-CBF controller and the CLF controller are plotted here in sheds of blue and green respectively.   }
\label{fig:2DCar_xy_multi_obstacle}
\end{figure}
 For our simulations of single obstacle the centre and radius was taken as $(3,2.5)$m and 0.6m respectively. For the multiple obstacle case we took 3 obstacles placed at $(1,1)$m, $(1,4)$m and $(3,2.5)$m with radius of 0.4m,0.4m and 0.6m respectively.
 The diagonal elements of the $Q$ matrix are chosen as $[1000,10]$ and $p=1000$. We simulated 20 sample trajectories for the experiments and the multiple obstacle avoidance task are shown in figure \ref{fig:2DCar_xy_multi_obstacle} and figure \ref{fig:2DCar_traj_multi_obstacle}. Interestingly from the enlarged portion of figure \ref{fig:2DCar_traj_multi_obstacle} we can see for the HDSCLF-CBF controller the trajectories take a \textit{U-turn} after encountering the obstacle and moves towards the goal point. The circular trajectories around the goal point in figure \ref{fig:2DCar_traj_multi_obstacle} is expected due to the asymptotic nature of the stability in the sense of probability from theorem \ref{th:sclf}.

\begin{figure}[H]
  \includegraphics[width=0.97\linewidth]{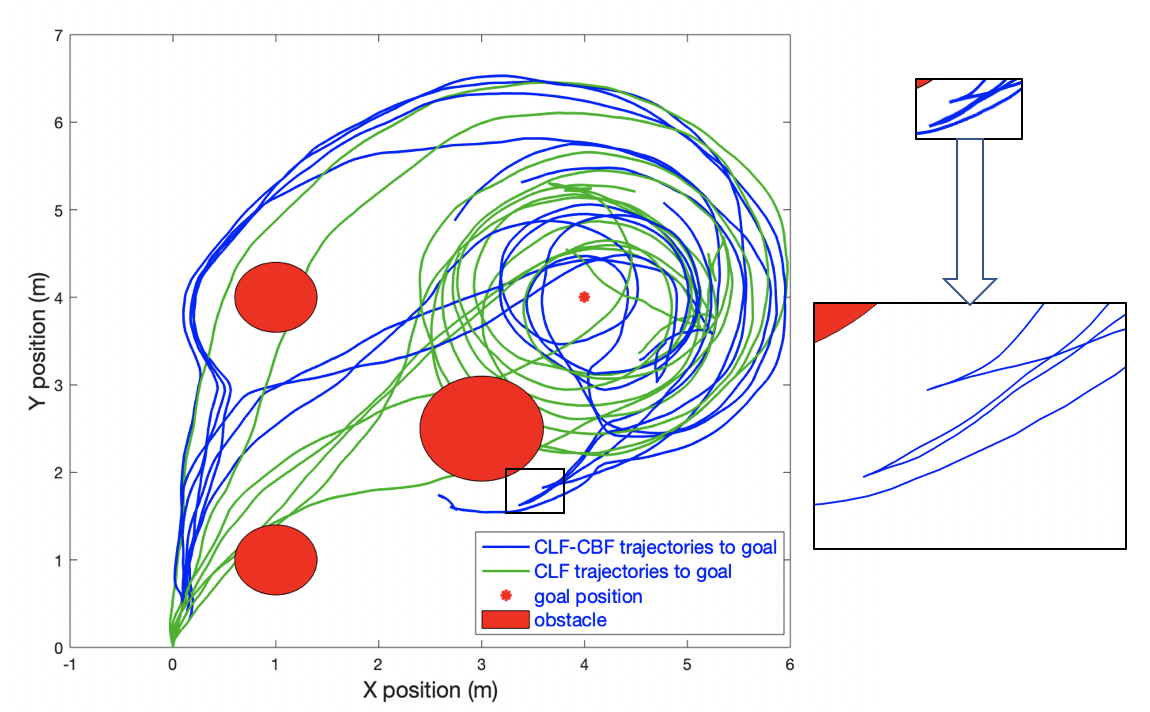}
  \caption{Multiple obstacle avoidance task: 7 test trajectories out of the 20 sample trajectories are shown here. The green ones generated with the CLF controller show no regard for the obstacles where as the blue ones from the CLF-CBF controller successfully avoids all the 3 obstacles in all cases. The selected enlarged portion shows how the blue trajectories takes a \textit{U-turn} when they encountered the obstacle shown in red. }
\label{fig:2DCar_traj_multi_obstacle}
\end{figure}
\subsection{2-Link Pendulum with Elastic Actuators (Relative Deg. 4)}
Here we consider a two link pendulum with elastic actuator described in  \cite{nguyen2016exponential}. The two-link pendulum is a system with 4 degrees of freedom and two degrees of underactuation. The commanded torques $\tau_1$ and $\tau_2$ of the two motors in the two elastic actuators will generate torque at the two joints indirectly through the following motor dynamics:
\begin{align*}
    J_m\ddot{\theta}_1^m= k(\theta_1-\theta_1^m) + \tau_1\\
    J_m\ddot{\theta}_2^m= k(\theta_2-\theta_2^m) + \tau_2
\end{align*}
where, $\theta_1^m$ and $\theta_2^m$ are the angles of the motors and $\theta_1^m$ and $\theta_2^m$ are the joint angles as shown in Fig. 2 of \cite{nguyen2016exponential}. $k$ and $J_m$ are the stiffness and inertia of the motor respectively. The Equations for torque at the two joints are given as:
\begin{align*}
    J_1\ddot{\theta}_1=u_1= -k(\theta_1-\theta_1^m) - \xi \dot{\theta}_1\\
    J_2\ddot{\theta}_2=u_1= -k(\theta_2-\theta_2^m) - \xi \dot{\theta}_2
\end{align*}
where $J_1$ and $J_2$ $(J_1>J_2)$ are the inertia of the links and $\xi$ denotes the damping coefficient at the joints. The objective is to swing the pendulum from the initial state $(\theta_1,\theta_2)= (-\pi/2, 0)$ to the final state of $(\theta_1,\theta_2)= (\pi/2, 0)$. We also restrict $-\pi\leq\theta_1\leq \pi$.The control Lyapunov function was defined as :
\begin{equation*}
    V_0(\vx)= \frac{1}{2}\big((\theta_1-\theta_{1desired})^2+(\theta_2-\theta_{2desired})^2\big)
\end{equation*}
The barrier function and the safe set are defined as follows:
\begin{align*}
   \psi_0(\vx) &= h(\vx)= (\theta_{1limit}^2-\theta_1^2)\\
   \mathcal{C}_{safe} &= \{\vx : \psi_0(\vx)\ge 0 \}\\
\end{align*}
We simulated 40 sample trajectories for a time horizon of 60s. $\sigma= 0.05$, $d= 1000$ and $Q$ was taken as Identity. The results from the simulations are presented in figure \ref{fig:2link_theta1_theta2} and figure \ref{fig:2Link_theta1_traj}.From our experiments we found that since the proposed method essentially solves a \ac{QP}, it is highly dependent on how efficiently the optimizer can converge to a feasible solution. The highlighted portion in figure \ref{fig:2Link_theta1_traj} shows one such cases where the optimizer fails to converge to a feasible solution and thus the trajectories violates the safety constraints. 
\begin{figure}
  \includegraphics[width=\linewidth ]{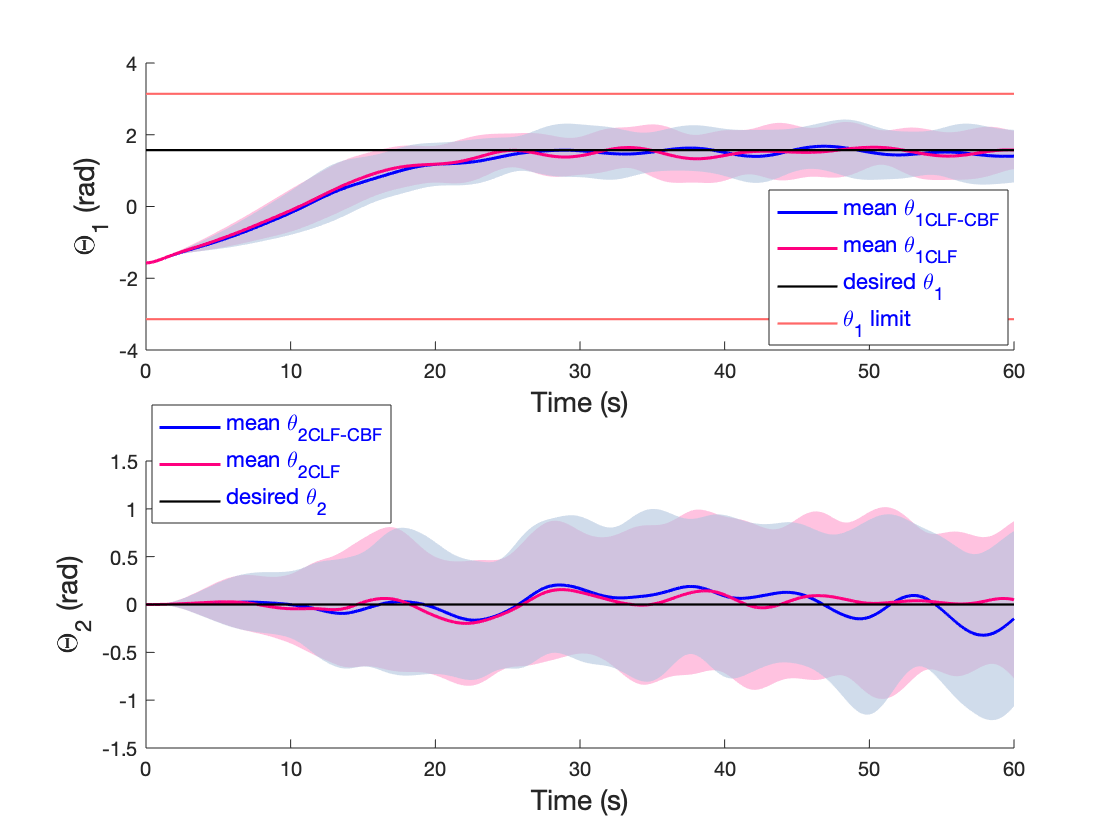}
  \caption{2-Link pendulum task: The mean and the sigma distribution of the $\theta_1$ and $\theta_2$ state trajectories for the CLF-CBF controller and the CLF controller are plotted here in sheds of blue and green respectively.}
\label{fig:2link_theta1_theta2}
\end{figure}
\begin{figure}[H]
  \includegraphics[width=\linewidth]{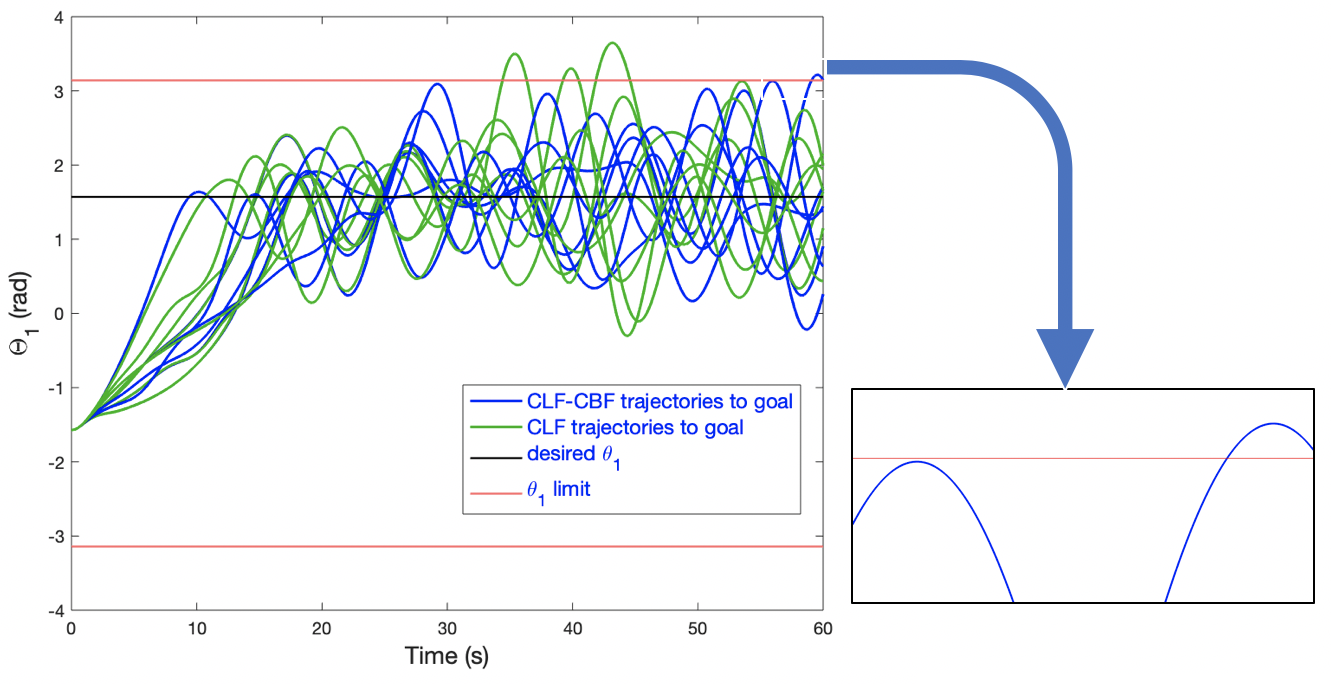}
  \caption{2-Link pendulum task: 7 test trajectories out of the 40 sample trajectories  for $\theta_1$ are shown here. The green ones generated with the CLF controller violates the constraint conditions where as the blue ones from the CLF-CBF controller successfully stays between the $-\pi\leq\theta_1\leq \pi$ bounds. The highlighted case shows the scenario where the optimizer fails to find a feasible solution and violates the boundary conditions. }
\label{fig:2Link_theta1_traj}
\end{figure}
\section{Conclusion and Future Work}\label{sec:conclusion}
We have introduced a generalised framework for control Lyapunov and barrier functions for stochastic systems with high relative degree. We have provided the necessary condition to guarantee asymptotic stability in the sense of probability to stochastic systems with observable states that are not directly actuated via control. Our high relative degree \ac{SCBF} framework ensure safety bounds to a system by providing a means to incorporate positional state dependent constraints into the control problem formulation. Our control policies are evaluated in the form of a \ac{QP} which has state dependent constraints based on \ac{SCLF}-\ac{SCBF} conditions. We have demonstrated the effectiveness of our proposed control methodology with numerical studies on a relative degree 2 system (2D car navigation with multiple obstacles) and 4 system (2-link pendulum with elastic actuator swing task).
\par The present work opens up numerous application of stochastic CLF and CBF formulation to safety critical systems. This formulation will be of particular interest to explore safe and sample efficient Reinforcement Learning frameworks that can incorporate the safety bounds in the formulation of the optimisation problem itself rather than learning them from making mistakes. 




\bibliography{root}
\balance
\bibliographystyle{ieeetr}



%

\end{document}